\newtheorem{prop}{Proposition}
\begin{document}

\title{An MAP Estimation for Between-Class Variance}

\author{%
\authorblockN{%
Jiao Han\authorrefmark{1}\authorrefmark{2},
Yunqi Cai\authorrefmark{1},
Lantian Li\authorrefmark{1},
Guanyu Li\authorrefmark{2},
Dong Wang\authorrefmark{1}
}
\authorblockA{%
\authorrefmark{1}
Center for Speech and Language Technologies, BNRist, Tsinghua University, China \\
E-mail: \{hanjiao,caiyq,lilt\}@cslt.org; wangdong99@mails.tsinghua.edu.cn}
\authorblockA{%
\authorrefmark{2}
Key Laboratory of China’s Ethnic Languages and Information Technology of Ministry of Education, \\ Northwest Minzu University, China \\
E-mail: guanyu-li@163.com}
}

\maketitle
\thispagestyle{empty}

\begin{abstract}
Probabilistic linear discriminant analysis (PLDA) has been widely used in open-set verification tasks, such as speaker verification.
A potential issue of this model is that the training set often contains limited number of classes, which makes the estimation for
the between-class variance unreliable. This unreliable estimation often leads to degraded generalization.
In this paper, we present an MAP estimation for the between-class variance, by employing an Inverse-Wishart prior.
A key problem is that with hierarchical models such as PLDA, the prior is placed on the variance of class means
while the likelihood is based on class members, which makes the posterior inference intractable.
We derive a simple MAP estimation for such a model,
and test it in both PLDA scoring and length normalization.
In both cases, the MAP-based estimation delivers interesting performance improvement.
\end{abstract}

\section{Introduction}
Probabilistic linear discriminant analysis (PLDA)~\cite{Ioffe06,prince2007probabilistic,sizov2014unifying}
has been extensively used in open-set verification tasks, such as speaker verification~\cite{campbell1997speaker,reynolds2002overview,hansen2015speaker}.
It represents the data with a linear Gaussian model, where the between-class distribution is a Gaussian and
the within-class distributions of individual classes are homogeneous Gaussians. The parameters of
this model involve a linear transform matrix $\mathbf{M}$ and the between-class covariance of the data
after the linear transform, and
they can be estimated by maximum likelihood (ML) training.
Once the model has been trained, it is possible to
decide whether two samples are produced from the same class or from two different classes~\cite{Ioffe06},
and this decision is optimal in terms of minimum Bayes risk (MBR)~\cite{wang2020remarks}.

A potential problem of PLDA is the unreliable estimation for the between-class covariance, denoted by $S_B$.
In many applications, the number of classes in the training data is limited. Take speaker verification
as an example, the largest open-source dataset VoxCeleb contains 7,000+ speakers.
Considering the high dimensionality
of the data (e.g., speaker vectors in speaker verification, whose dimension is 400-600), it is difficult to
estimate a reasonable between-speaker covariance with maximum likelihood training.

To have an intuition, we take a simulation experiment by sampling $n$ samples from a Gaussian distribution
and a Laplacian distribution, and then compute the ML-based variance estimation for each sampling.
We plot the variance's variance to show the reliability of the estimation.
As shown in Fig.~\ref{fig:sim},
when the number of samples $n$ is small, the variance's variance is large, indicating that the estimation is highly unreliable.
This conclusion is more clear with the Laplacian distribution, due to its heavy-tail property.

\begin{figure}[!htp]
	\centering
	\includegraphics[width=1.\linewidth]{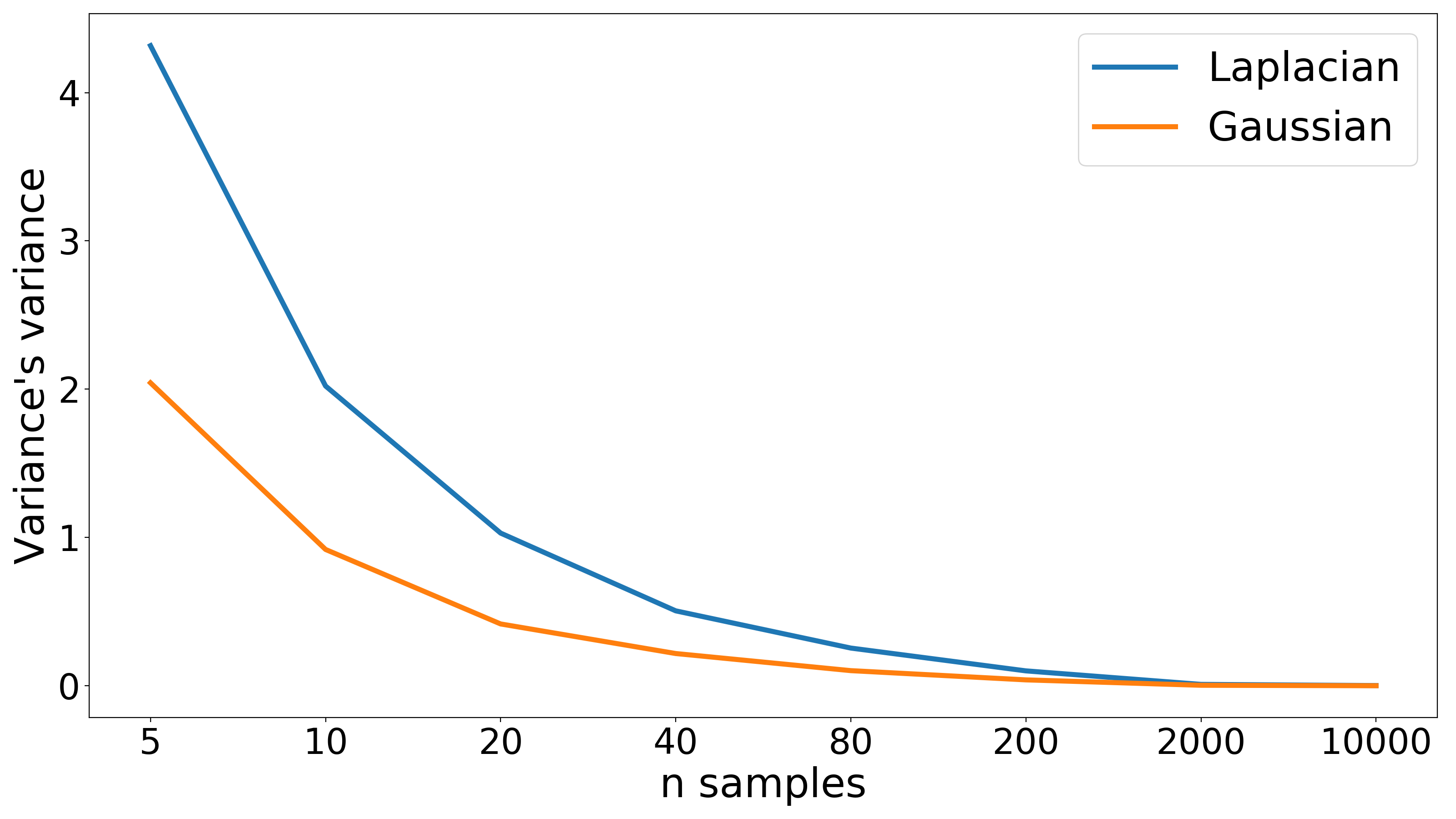}
	\caption{Variance's variance of $n$ samples from Gaussian and Laplacian distributions.
            For each distribution, we firstly sample $n$ data points, and then
            compute their variance $\sigma$ (i.e., ML-based estimation for the underlying true variance).
            This process repeats 10,000 times and the variance of the 10,000 $\sigma$ values is computed.
            The x-axis is the number of samples $n$, and the y-axis is the value of variance's variance.
            For simplicity, the data points are one-dimensional.
            For the sake of comparison, the true variances of two distributions are both set to 1.0.}
	\label{fig:sim}
\end{figure}


For PLDA, the limited number of classes leads to the same unreliable estimation for the between-class covariance $S_B$.
For speaker verification, the distribution of speaker vectors is known to be heavy-tailed~\cite{kenny2010bayesian},
which makes the ML estimation for $S_B$ even more unreliable.
Moreover, the dimensionality of speaker vectors is often as high as 400-600, which further exaggerates
the problem.

In this paper, we propose a robust estimation for the between-class variance $S_B$, by placing an Inverse-Wishart prior on $S_B$ and then conduct
maximum a posterior (MAP) estimation.
At the first glance, the MAP estimation seems trivial if the prior and the associated conditional likelihood are given.
However, for hierarchical probabilistic models such as PLDA, it is much more convolved. This is because the prior is placed
on the covariance of the \emph{class means} while the likelihood is based on the \emph{class members}. This complication
leads to intractable inference for $S_B$.  We will prove that under some mild conditions,
the MAP estimation can be reformulated to a simple linear interpolation of $S_B$ derived by the standard PLDA and a prior covariance.

\section{Theory}

\subsection{Preliminary of PLDA}
We consider the two-covariance form of the PLDA model~\cite{sizov2014unifying}, which assumes a linear Gaussian
 as follows, where $k$ indexes the class, and $i$ indexes samples of a particular class:

\begin{equation}
\pmb{x}_{ki}= \pmb{\mu}_0 + \mathbf{M} \pmb{\mu}_{k} + \pmb{n}_{ki},
\end{equation}
\begin{equation}
\pmb{\mu}_k \sim \mathcal{N} (\mathbf{0}, diag(\pmb{\epsilon})),
\end{equation}
\begin{equation}
\pmb{n}_{ki} \sim \mathcal{N} (\mathbf{0}, \mathbf{I}),
\end{equation}

\noindent where we assume $\mathbf{M}$ is of full rank. By this assumption,
the within-class covariance is the identity matrix $\mathbf{I}$, and the between-class covariance is computed as follows:

\begin{equation}
S_B = \mathbf{M} diag(\pmb{\epsilon}) \mathbf{M}^T.
\end{equation}

The likelihood of the data of a particular class $k$ can be computed as follows:

\begin{eqnarray}
p(\pmb{x}_1,\pmb{x}_2, ..., \pmb{x}_{n_k}) &=& \int p(\pmb{x}_1,...,\pmb{x}_{n_k} | \pmb{\mu}_k) p(\pmb{\mu}_k) {\rm d}\pmb{\mu}_k \nonumber \\
                         &=& \int \prod_{i=1}^{n_k} p(\pmb{x}_i | \pmb{\mu}_k) p(\pmb{\mu}_k) {\rm d}\pmb{\mu}_k. \label{eq:px}
\end{eqnarray}

Since $p(\pmb{x}_i | \pmb{\mu}_k)$ and $p(\pmb{\mu}_k)$ are both Gaussian, it is easy to show that
$p(\pmb{x}_1,\pmb{x}_2, ..., \pmb{x}_{n_k})$ is Gaussian and can be computed efficiently. Collecting all the data of
$K$ classes, the likelihood function can be computed as follows:

\begin{equation}
\label{eq:ml}
\mathcal{L}(\pmb{\epsilon}, \mathbf{M}, \pmb{\mu}_0) = \prod_{k=1}^K p(\pmb{x}^k_1, ..., \pmb{x}^k_{n_k}),
\end{equation}
\noindent where $\pmb{\epsilon}, \mathbf{M}, \pmb{\mu}_0$ are the parameters. Maximizing this function with
respect to these parameters leads to a maximum likelihood (ML) training.

Once the model has been trained, it can be employed to perform verification tasks.
According to the hypothesis test theory~\cite{neyman1933ix},
the following likelihood ratio (LR) is optimal in terms of Bayes risk, when used to judge whether a test sample $\pmb{x}$
belongs to the class represented by the enrollment samples $\{\pmb{x}_1, ..., \pmb{x}_n\}$:

\begin{equation}
\label{eq:lr}
 \emph{LR} = \frac{p(\pmb{x}, \pmb{x}_1, ..., \pmb{x}_n)}{ p(\pmb{x}) p(\pmb{x}_1, ..., \pmb{x}_n)}.
\end{equation}

Thanks to the linear Gaussian form of the model, the LR score has a closed form and can be computed efficiently~\cite{Ioffe06,prince2007probabilistic}.

\subsection{MAP estimation for $S_B$}

Suppose an Inverse-Wishart prior on $\pmb{\epsilon}$~\cite{murphy2007conjugate}:

\begin{equation}
\label{eq:wishart}
p(\pmb{\epsilon}; \pmb{\phi}, \nu) = \frac{1}{Z} (\prod_{j=1}^p \epsilon_j)^{-\frac{\nu+p+1}{2}} \exp \{-\frac{1}{2} \prod_{j=1}^{p} \phi_j \epsilon^{-1}_j \},
\end{equation}
\noindent where $p$ is the dimension of the data and $Z$ is a normalization term.
If there are $n$ observations following a Gaussian,
it is easy to derive that the MAP estimation for $\pmb{\epsilon}$ is given as follows~\cite{murphy2007conjugate}:

\begin{equation}
\label{eq:post}
\pmb{\epsilon}_{\emph{MAP}} = \frac{\pmb{\phi}+ n \pmb{\epsilon}_{\emph{ML}}}{\nu+n+p+1},
\end{equation}

\noindent where $\pmb{\epsilon}_{\emph{ML}}$ denotes the ML estimation for $\pmb{\epsilon}$ with $n$ observations.

If we place an Inverse-Wishart prior on $S_B$ of the PLDA model, the graphical representation is shown in Fig.~\ref{fig:graph}.
In this case, the interaction between $\bm{\epsilon}$ and the observation $\bm{x}$ is indirect, and there are no i.i.d. Gaussian samples
can be used to estimate $\pmb{\epsilon}_{\emph{MAP}}$  by (\ref{eq:post}).
One possibility is to use the likelihood function (\ref{eq:px}) as the conditional probability, and derive the MAP estimation as follows:

\begin{equation}
\pmb{\epsilon}_{\emph{MAP}} = \arg\max_{\pmb{\epsilon}} p(\pmb{\epsilon}; \pmb{\phi}, \nu)  \prod_k p(\pmb{x}_1,\pmb{x}_2, ..., \pmb{x}_{n_k}|\pmb{\epsilon})
\end{equation}

\begin{figure}[!htp]
 \centering
 \includegraphics[width=0.75\linewidth]{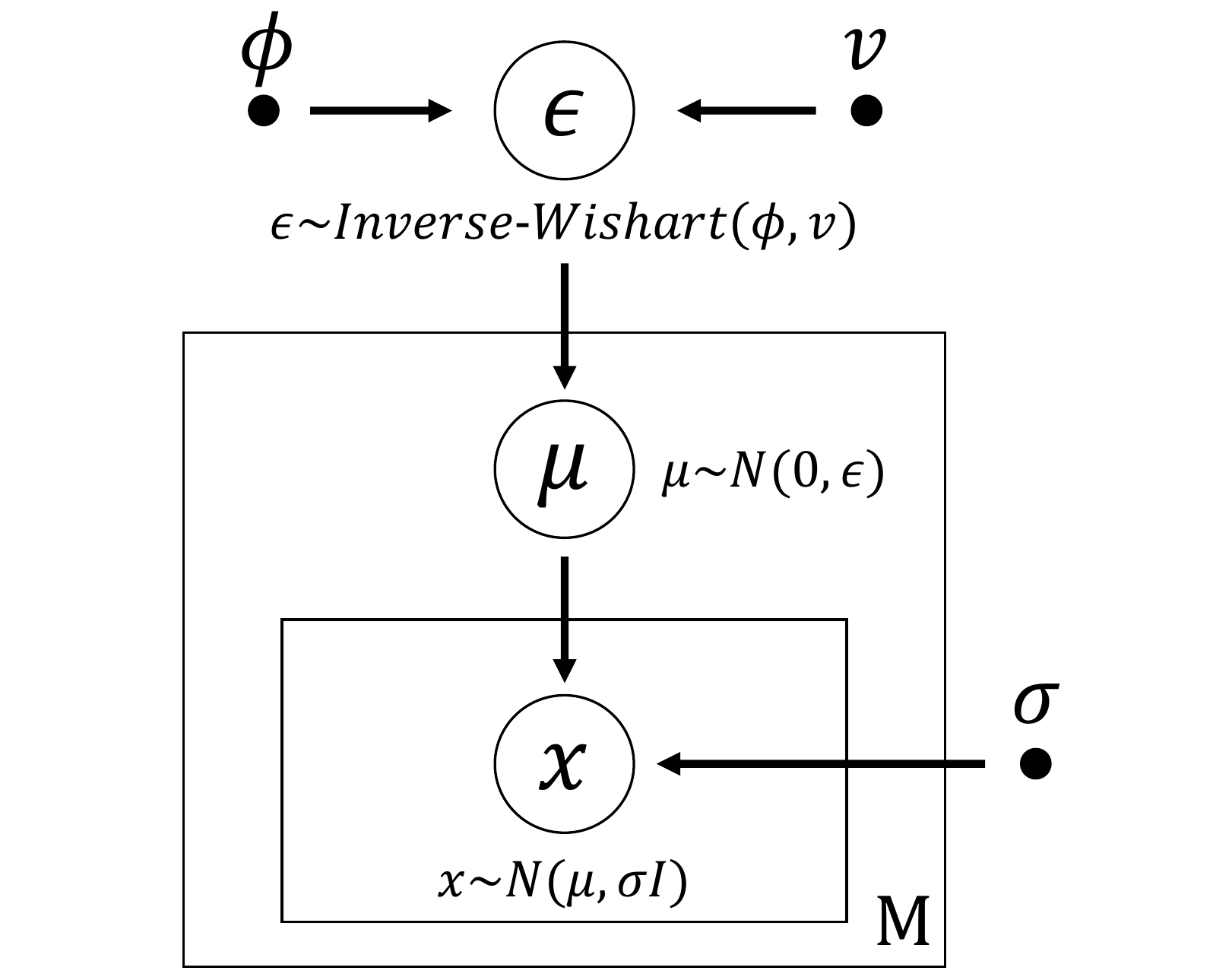}
 \caption{Graphical model of PLDA with an Inverse-Wishart prior on the between-class variance $\pmb{\epsilon}$.}
 \label{fig:graph}
\end{figure}

Since $ \prod_k p(\pmb{x}_1,\pmb{x}_2, ..., \pmb{x}_{n_k}|\pmb{\epsilon})$ involves an undetermined parameter $\mathbf{M}$ and  needs to marginalize on $\pmb{\mu}_k$,
the inference for $\pmb{\epsilon}_{\emph{MAP}}$ is intractable. Although a variational approach can be used~\cite{villalba2011towards}, the iterative
process leads to increased computational load.
We will show that a simple MAP estimation can be derived by using $\pm{\epsilon}$ derived from the standard PLDA, under mild conditions.

\begin{prop}
If every class involves $n$ training samples, and PLDA has been well trained, the between-class covariance
$\bm{\epsilon}$ can be written as $\bm{\epsilon} =  \frac{\sum_k \bar{\bm{x}}'_k}{K} - 1/n$,
where $\bar{\pmb{x}}'_k = \frac{1}{n} \sum_{i=1}^{n} \mathbf{M}^{-1}(\pmb{x}_{ki} - \pmb{\mu}_0)$.
\end{prop}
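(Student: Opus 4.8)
The plan is to pass to the coordinates in which the two-covariance model becomes a balanced one-way random-effects (normal-means) model, write down the EM update for $\pmb{\epsilon}$ there (equivalently, the stationarity condition of the ML likelihood), and solve the resulting fixed-point equation in closed form; the claimed identity then drops out exactly, with no appeal to asymptotics.

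First I would change variables to $\pmb{x}'_{ki}=\mathbf{M}^{-1}(\pmb{x}_{ki}-\pmb{\mu}_0)$; substituting the model equation gives $\pmb{x}'_{ki}=\pmb{\mu}_k+\mathbf{M}^{-1}\pmb{n}_{ki}$, so the class means $\pmb{\mu}_k\sim\mathcal{N}(\mathbf{0},diag(\pmb{\epsilon}))$ are observed through additive Gaussian noise. The \emph{mild condition} I would invoke is that $\mathbf{M}$ is the orthonormal eigenbasis of $S_B$ --- the canonical choice that makes $diag(\pmb{\epsilon})$ the eigenvalues of $S_B$ --- so that $\mathbf{M}^{-1}\mathbf{M}^{-T}=\mathbf{I}$ and hence $\mathbf{M}^{-1}\pmb{n}_{ki}\sim\mathcal{N}(\mathbf{0},\mathbf{I})$. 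Each coordinate $j$ then decouples into a scalar normal-means model: $n$ i.i.d.\ observations of $\mu_{kj}\sim\mathcal{N}(0,\epsilon_j)$ corrupted by unit-variance Gaussian noise.

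Next I would use conjugacy. The posterior of $\pmb{\mu}_k$ given the $n$ samples of class $k$ has covariance $diag\big(\epsilon_j/(1+n\epsilon_j)\big)$ and mean $diag\big(n\epsilon_j/(1+n\epsilon_j)\big)\,\bar{\pmb{x}}'_k$, where $\bar{\pmb{x}}'_k=\frac1n\sum_i\pmb{x}'_{ki}$; here the balanced design matters, since every class shares the same posterior covariance. Meanwhile the $\pmb{\mu}_0$-update at the optimum makes the grand mean $\frac1K\sum_k\bar{\pmb{x}}'_k$ vanish, so the uncentered and centered second moments of the class means coincide. The EM update for $\epsilon_j$ is the across-class average of the posterior second moment of $\mu_{kj}$, i.e.\ $\epsilon_j^{\mathrm{new}}=a_j^2 s_j+a_j/n$, where $a_j:=n\epsilon_j/(1+n\epsilon_j)$ and $s_j:=\frac1K\sum_k(\bar{x}'_{kj})^2$. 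Imposing the fixed point $\epsilon_j^{\mathrm{new}}=\epsilon_j$ and using the identity $\epsilon_j=a_j/(n(1-a_j))$, the equation collapses --- after cancelling a common factor $a_j$, which is where the remaining mild assumption $\epsilon_j>0$ enters --- to $s_j=1/(n(1-a_j))$, whence $\epsilon_j=s_j-1/n$. Reassembling the coordinates and recalling $\bar{\pmb{x}}'_k=\frac1n\sum_i\mathbf{M}^{-1}(\pmb{x}_{ki}-\pmb{\mu}_0)$ gives $\pmb{\epsilon}=\frac1K\sum_k(\bar{\pmb{x}}'_k)^{2}-\frac1n$, the stated identity (square and subtraction taken elementwise).

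The posterior and update formulas are routine conjugate-Gaussian calculus; the only genuine work is the short fixed-point algebra, and the point I would be most careful about is being explicit that ``well trained'' packs in two things: that $\mathbf{M}$ is orthonormal, so the change of variables leaves the within-class covariance equal to $\mathbf{I}$ --- this is exactly what produces the clean $1/n$ correction --- and that each $\epsilon_j$ is strictly positive. A shorter but less self-contained alternative I would mention is a method-of-moments argument: $\mathbb{E}\big[\bar{\pmb{x}}'_k(\bar{\pmb{x}}'_k)^{T}\big]=diag(\pmb{\epsilon})+\frac1n\mathbf{I}$, so equating this with the empirical second moment of the class means and reading off the diagonal gives the result at once; the EM fixed-point computation is what upgrades this from an asymptotic (in $K$) statement to an exact one.
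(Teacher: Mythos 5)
Your proof is correct and reaches the paper's formula (note that the proposition as printed drops the elementwise square on $\bar{\bm{x}}'_k$; both you and the paper's own proof derive $\bm{\epsilon}=\frac{1}{K}\sum_k(\bar{\bm{x}}'_k)^2-\frac{1}{n}$). The reduction is the same as the paper's --- transform by $\mathbf{M}^{-1}$, decouple the dimensions, and observe that the per-class data enter only through $\bar{x}'_{kj}$ --- but the final step differs. The paper integrates out $\mu_k$ explicitly to get the marginal likelihood $p(\bm{x}_1,\dots,\bm{x}_n;\epsilon_j)\propto(1+n\epsilon_j)^{-1/2}\exp\{\frac{n^2\epsilon_j}{2(1+n\epsilon_j)}(\bar{x}'_j)^2\}$, differentiates the summed log-likelihood in $\epsilon_j$, and solves the stationarity equation directly. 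You instead characterize the ML point as the fixed point of the EM update built from the posterior second moment of $\mu_{kj}$; since EM fixed points coincide with stationary points of the marginal likelihood, the two computations are equivalent, and your fixed-point algebra ($\epsilon_j=a_j^2 s_j+a_j/n$ with $\epsilon_j=a_j/(n(1-a_j))$, giving $s_j=1/n+\epsilon_j$) checks out. The paper's route is slightly more self-contained (one integral, one derivative); yours buys a cleaner probabilistic interpretation and makes explicit two things the paper glosses over: that ``well trained'' must include $\mathbf{M}$ being orthogonal (otherwise the within-class covariance of $\bm{x}'$ is $\mathbf{M}^{-1}\mathbf{M}^{-T}\neq\mathbf{I}$ and the clean $1/n$ correction is lost), that the grand mean must vanish via the $\pmb{\mu}_0$ update, and that $\epsilon_j>0$ is needed to cancel the factor $a_j$. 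Your closing method-of-moments remark, $\mathbb{E}[\bar{\bm{x}}'_k(\bar{\bm{x}}'_k)^T]=diag(\pmb{\epsilon})+\frac{1}{n}\mathbf{I}$, is also a useful one-line sanity check on the form of the answer.
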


\begin{proof}

Since the PLDA model has been well trained, $\bm{x}'=\mathbf{M}^{-1}(\bm{x}-\bm{\mu}_0)$ show a regulated distribution: the between-class variance is $diag(\bm{\epsilon})$, and the
within-class variance is $\mathbf{I}$. Considering a particular class, the joint probability of the class members is a function of $\bm{\epsilon}$. Considering a particular dimension $j$,
the probability is given by:

\begin{eqnarray}
&&p(\bm{x}_1,...,\bm{x}_n; \epsilon_j)  \nonumber \\
&\propto& \int \frac{1}{\epsilon_j^{1/2}} \exp \big\{-\frac{1}{2\epsilon_j} \mu^2 \big\} \prod_i^n \exp \big\{-\frac{1}{2}(x'_{ij} - \mu )^2 \big\} \rm{d} \mu \nonumber \\
&=& \frac{1}{\epsilon_j^{1/2}} \int \exp \big\{ -\frac{1}{2\epsilon_j} \mu^2 -\frac{1}{2}\sum_i(x'_{ij} - \mu)^2 \big\} \rm{d} \mu \nonumber \\
&=& \frac{1}{\epsilon_j^{1/2}} \int \exp \big\{ -\frac{(1 + n \epsilon)}{2\epsilon_j} (  (\mu -  \frac{\epsilon_j}{(1 + n \epsilon_j)} \sum_i x'_{ij} )^2 \nonumber \\
&& - \frac{\epsilon_j^2}{(1 + n \epsilon_j)^2} \sum_i (x'_{ij})^2 ) \big\} \rm{d} \mu \nonumber \\
&\propto&\frac{1}{(1+n\epsilon_j)^{1/2}} \exp \big\{\frac{n^2\epsilon_j}{2(1+n\epsilon_j)}(\bar{x}'_j)^2\big\}
\end{eqnarray}

Considering all the classes:

\begin{eqnarray}
\mathcal{L}(\epsilon_j) &=& \sum_{k=1}^K \ln p(x_1^k, x_2^k, ..., x_n^k; \epsilon_j)  \nonumber \\
                        &=& -\frac{K}{2} \ln (n \epsilon_j + 1) +  \frac{n^2\epsilon_j}{2(1+n\epsilon_j)} \sum_k (\bar{x}'_{jk})^2 + const \nonumber
\end{eqnarray}

Take derivative of $\mathcal{L}(\epsilon_j)$ with respect to $\epsilon_j$ and set it to 0:

\begin{equation}
\frac{\partial \mathcal{L} (\epsilon_j)} {\partial \epsilon_j} = -\frac{K}{2} \frac{n}{n\epsilon_j + 1} + \frac{1}{2} \sum_k (\bar{x}'_{jk})^2  \frac{n^2(1+n\epsilon_j) - n^3 \epsilon_j}{(1+n\epsilon_j)^2} = 0
\end{equation}

A simple computation shows:

\begin{equation}
\epsilon_j = \frac{\sum_{k=1}^K (\bar{x}'_{jk})^2}{K} - \frac{1}{n}
\end{equation}

Since all the dimensions of $\bm{x}'$ are independent, we have:

\begin{equation}
\bm{\epsilon} = \frac{\sum_{k=1}^K (\bar{\bm{x}}'_{k})^2}{K} - \frac{1}{n}
\end{equation}

\end{proof}

Obviously, if $n$ is large, the estimation for $\pmb{\epsilon}$ approaches to:

\begin{equation}
\label{eq:eps}
\bm{\epsilon} \approx \frac{\sum_k (\bar{\bm{x}}'_{k})^2}{K},
\end{equation}
\noindent which can be interpreted as an ML estimation for the covariance of a Gaussian distribution represented by the
$K$ \textbf{virtual} samples $\{\bar{\bm{x}}'_k\}$.

Since $\pmb{\epsilon}$ derived by PLDA is equivalent to $\pmb{\epsilon}_{\emph{ML}}$ derived with the $K$ virtual
samples $\{\bar{\bm{x}}'_k\}$, we can use these virtual samples as the observations of the underlying Gaussian
model, and derive the MAP estimation for the covariance of these samples:

\begin{equation}
\label{eq:emap}
\pmb{\epsilon}_{\emph{MAP}} = \frac{\pmb{\phi}+ K \pmb{\epsilon}}{\nu+K+p+1},
\end{equation}
\noindent where $\pmb{\epsilon}$ is obtained from the standard PLDA.

By defining appropriate $\pmb{\phi}$ and $\nu$, the above equation
can be reformulated as a simple linear interpolation:

\begin{equation}
\label{eq:interp}
\pmb{\epsilon}_{\emph{MAP}} = \frac{\alpha \pmb{\epsilon}_0+ K \pmb{\epsilon}}{\alpha + K},
\end{equation}
\noindent where $\pmb{\epsilon}_0$ can be interpreted as a prior covariance, and
$\alpha$ is a hyper-parameter that represents the number of virtual samples associated with $\pmb{\epsilon}_0$.
We therefore derived a simple form of MAP estimation for the between-class variance with PLDA.

We highlight that although the final result (\ref{eq:interp}) looks simple, it should not be regarded as
trivial. In fact,
to derive such a result, we have assumed that all the classes involve the same number of samples. This is even not
true in most practical usage, demonstrating that (\ref{eq:interp}) is not as straightforward as the first glance.
\footnote{Fortunately, one can verify that if each class contains sufficient samples, (\ref{eq:eps}) remains correct
and hence (\ref{eq:interp}) holds.}


\subsection{Applied to length normalization}

The MAP-estimated $\bm{\epsilon}$ (and hence $S_B$) can be used directly in PLDA scoring, which we will call PLDA/MAP.
Moreover, the more robust $S_B$ can be used to improve length normalization (LN) as well.
LN is a simple and effective trick that has been widely used in speaker verification~\cite{garcia2011analysis}.
The key idea is that for a high-dimensional Gaussian distribution,
most of the samples should concentrate on an eclipse surface defined by the covariance.
Suppose the distribution has been aligned to the axes, the eclipse surface will be as follows:

\begin{equation}
\sum_j \frac{x_j^2}{\lambda_j} = p
\end{equation}
\noindent where $\lambda_j$ is the variance of the $j$-th dimension. In the PLDA model,
this variance consists of the between-class variance $\epsilon_j$ and the within-class variance $\sigma = 1$, i.e., $\lambda_j = \epsilon_j + 1 $.
LN scales the speaker vectors to this surface if they are not, with the scale factor computed by:

\begin{equation}
r=\frac{\sqrt p}{\sqrt{\sum_j \frac{x_j^2}{\epsilon_j + 1}}}.
\end{equation}

It has been shown that this scaling can greatly improve the Gaussianality of the speaker vectors, hence making them more suitable for PLDA modeling.

Here we encounter the same problem as in PLDA scoring: if $\pmb{\epsilon}$ is not well estimated, the scaling will be incorrect. In particular for
speaker vectors aligned to the directions corresponding to a large $\epsilon_j$, the scaling tends to be aggressive.
The MAP-based estimation for $\pmb{\epsilon}$ discounts large variance and thus is expected to alleviate this problem.
For that purpose, we simply use $\pmb{\epsilon}_{\emph{MAP}}$ to compute the scale factor $r$.
We will call the revised length normalization as LN/MAP.

\section{Related work}

Brummer et al.~\cite{brummer2010bayesian} presented the initial idea of Bayes PLDA,
with the aim to overcome the shortage associated with the point estimation for the
parameters in the conventional ML-based PLDA model.

Villalba et al.~\cite{villalba2011towards} employed a Bayes approach to
improve speaker verification with i-vectors~\cite{dehak2011front}.
This approach was further extended to deal with domain adaptation, where the PLDA parameters
obtained in one domain were used as the prior when training PLDA in a new
domain~\cite{villalba2012bayesian,ito2008speaker}.
Although theoretically interesting, it relies variational inference in both training and test,
which is not very friendly and so is rarely used.


The Inverse-Wishart distribution was generally used as the prior for distance/correlation matrix.
For example, Fang et al.~\cite{fang2013bayesian} employed this prior to regularize the
metric learning with an i-vector system. Ito et al.~\cite{ito2008speaker} employed this prior to adapt
the covariance in the GMM-UBM architecture for speaker verification.



\section{Experiments}
\label{sec:exp}

We evaluate the proposed approach by a speaker verification task, following the deep speaker embedding framework~\cite{deng2014deep,ehsan14,li2017deep}.
Given a speech segment, a speaker vector is produced by a deep neural network which consists of frame-level
feature extractor and utterance-level pooling.
In this paper, we employ the x-vector model~\cite{snyder2018xvector} to produce the speaker vectors.
This model is trained using the Kaldi toolkit~\cite{povey2011kaldi}, following the SITW recipe
\footnote{https://github.com/kaldi-asr/kaldi/tree/master/egs/sitw/}.
The dimensionality of the x-vectors was set to $512$. Once the speaker vectors are obtained,
a PLDA model with LDA dimension reduction is trained and employed to score the test trials.
Note that our research goal here is to demonstrate the MAP-based estimation for $S_B$ rather than present
a SOTA speaker verification system. For this purpose, using a public recipe in Kaldi is a reasonable choice.
Readers can refer to~\cite{matvejka2019analysis,VILLALBA2020101026} for SOTA performance on the same task.

\begin{table*}[!htb]
 \caption{EER(\%) results with different settings of PLDA and LN}
 \label{tab:res}
 \centering
   \scalebox{1.0}{
 \begin{tabular}{lllcccc}
  \cmidrule(r){1-7}
                             &  No.  &  Model         &  SITW.Dev   &  SITW.Eval        &  HI-MIA.Dev   &  HI-MIA.Eval       \\
  \cmidrule(r){1-7}
  \multirow{6}{*}{LDA[512]}  & 1     & PLDA           &  3.697      &  4.019             &  1.080         &  0.891             \\
                             & 2     & PLDA/MAP       & \textbf{3.466}  & \textbf{3.909}   & \textbf{0.945} &  \textbf{ 0.810 }  \\
  \cmidrule(r){2-7}
                             & 3     & PLDA + LN          &  4.005  &  4.647                       & 1.484          & 1.296               \\
                             & 4     & PLDA + LN/MAP      &  3.928  &  4.483                       &  1.350         &  1.134              \\
                             & 5     & PLDA/MAP + LN      &  3.889  &  4.429                       & 1.080          & 0.891               \\
                             & 6     & PLDA/MAP + LN/MAP  &  3.812  &  4.374                       &  1.080         & 0.891               \\
  \cmidrule(r){1-7}
  \multirow{6}{*}{LDA[150]}  & 7     & PLDA           &  3.273      &  3.800                       & 1.215          &  0.972             \\
                             & 8     & PLDA/MAP       &  3.196      &  3.745                       & \textbf{1.080} &  \textbf{0.891}    \\
  \cmidrule(r){2-7}
                             & 9     & PLDA + LN      &  2.965      &  3.362                       &  1.350         &  1.134             \\
                             & 10    & PLDA + LN/MAP  &  3.003      & \textbf{3.335}               &  1.350         &  1.053             \\
                             & 11    & PLDA/MAP + LN  &  3.003      &  3.417                       &  1.215         &  1.134             \\
                             & 12    & PLDA/MAP + LN/MAP  & \textbf{2.926}  & 3.417                & 1.080          &  0.972             \\
  \cmidrule(r){1-7}
  \cmidrule(r){1-7}
  \end{tabular}}
\end{table*}

\subsection{Data}

Three datasets were used in our experiments: VoxCeleb, SITW, and HI-MIA. Details are as follows:

\textbf{VoxCeleb}~\cite{nagrani2017voxceleb,chung2018voxceleb2}: An open-source speaker dataset collected from media sources by University of Oxford.
This dataset contains 2,000+ hours of speech signals from 7,000+ speakers.
This dataset was used to train the x-vector model and the PLDA model used in the test on the SITW dataset.

\textbf{SITW}~\cite{mclaren2016speakers}: A standard evaluation dataset consists of 299 speakers.
The core-core trials built on the SITW.Dev set was used to optimize the prior weight $\alpha$ in the MAP estimation of (\ref{eq:interp}).
The core-core trials built on the SITW.Eval set were used for evaluation.


\textbf{HI-MIA}~\cite{qin2020hi}: An open-source text-dependent speaker recognition dataset.
All the speech utterances contain the word `Hi MIA', recorded by a microphone 3 meters away from the speaker.
The development set (used for training PLDA and estimating the MAP prior weight) involves 5,062 utterances from 254 speakers, and the evaluation
set involves 1,665 utterances from 86 speakers.

\subsection{Behavior of the MAP estimation}

In the first experiment, we study the behavior of MAP estimation using the SITW.Dev core-core trials.
We set the prior covariance $\pmb{\epsilon}_0$ = 1 in ~(\ref{eq:interp}),
and set $K$ as the number of speakers used for training the PLDA model,
which is 6,300 in our experiment.
The performance of the PLDA/MAP on SITW.Dev in terms of equal error rate (EER) is reported in Fig.~\ref{fig:sitw-dev},
where the prior weight $\alpha$ changes from 0 to 7,000.
Note that PLDA/MAP with $\alpha$ = 0 is just the conventional PLDA.
It is clear to see that PLDA/MAP can substantially improve system performance with an appropriate $\alpha$.
Notice that there is an optimal $\alpha$ that best trades off the contribution of the prior and the data.

\begin{figure}[!htbp]
\centering
\includegraphics[width=1.\linewidth]{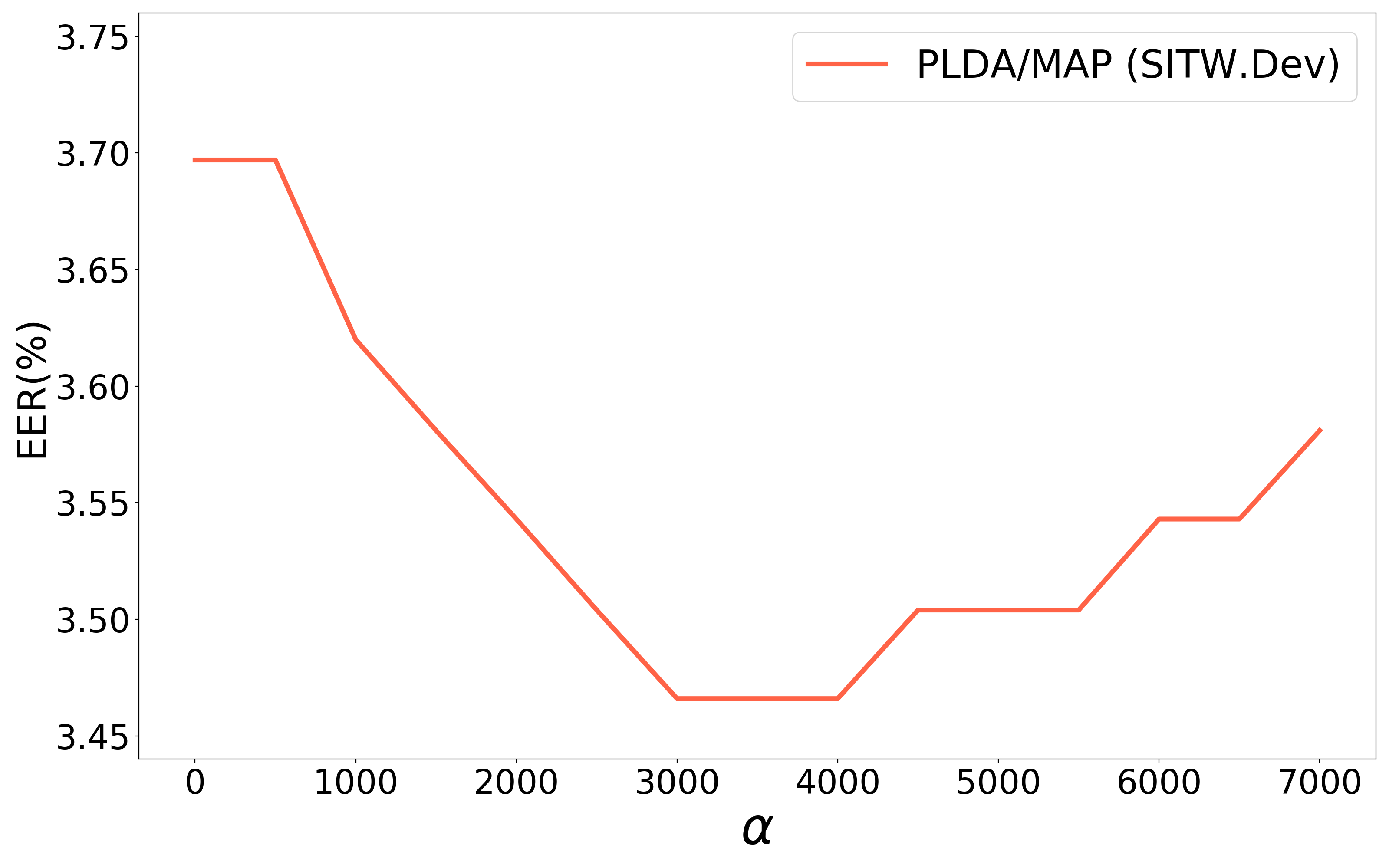}
\caption{EER results of PLDA/MAP on SITW.Dev with different $\alpha$.}
\label{fig:sitw-dev}
\end{figure}

\subsection{Detailed results}

In this experiment, we choose $\alpha$ using the development sets (SITW.Dev for SITW.Eval test, and HI-MIA.Dev for HI-MIA.Eval test)
based on the EER results with PLDA/MAP, and then apply the optimal $\alpha$ to both PLDA/MAP and LN/MAP.
The EER results are reported in Table~\ref{tab:res}.

Firstly, we observe that in almost all the cases, PLDA/MAP outperforms the conventional PLDA.
The general improvement obtained by PLDA/MAP implies that the MAP estimation indeed delivers a better between-class covariance.

Secondly, we found that in most tests, LN/MAP clearly outperforms the standard LN.
This double confirms that the MAP estimation produces a better between-class covariance.
Moreover, since the improvement was obtained by using the prior weight $\alpha$ selected based on PLDA/MAP,
we conclude that the priors for PLDA/MAP and LN/MAP are consistent.

Thirdly, we observe that in the LDA[512]-dim tests, PLDA/MAP + LN (system 5) generally outperforms PLDA + LN (system 3),
but this is not always the case in the LDA[150]-dim tests.
For example, in the SITW test, PLDA/MAP + LN (system 11) performs worse than PLDA + LN (system 9).
A possible reason is that with the LN operation, the data statistics has been changed,
and so the MAP estimation based on the original statistics may be suboptimal.
In general, LN/MAP is more safe than PLDA/MAP as it can improve performance in
almost all the cases, though in many cases, PLDA/MAP can deliver more improvement than LN/MAP.

Finally, it seems that combining PLDA/MAP and LN/MAP (system 6) may lead to performance improvement in some cases, but this is not always the case.
The improvement, even if it is observed, is not significant. This can be explained again by the suboptimum of the MAP estimation
for the data after length normalization.

\section{Conclusions}

We presented a simple form of MAP estimation for the between-class covariance in the PLDA model.
Our derivation shows that under mild conditions, the MAP estimation can be formed as a linear interpolation
of the ML estimation obtained by standard PLDA and a prior covariance.
We employed the MAP-estimated between-class covariance
to both PLDA scoring and length normalization, and interesting performance improvement was obtained.
Future work will investigate better strategies to combine MAP estimation and length normalization.

\section*{Acknowledgment}

This work was supported by the National Natural Science Foundation of China (NSFC) under Grants No.61633013 and No.62171250, 
and also the Innovation Research Program of Northwest Minzu University under Grant No.YXM2021005.

\bibliographystyle{IEEEtran}
\bibliography{ref}

\end{document}